\newtheorem{theorem}{Theorem}
\newtheorem{proof}{Proof}
\title{New classes of tests for the Weibull distribution using Stein's method in the presence of random right censoring}
\author{
  E. ~Bothma\\
  Subject Group Statistics\\
  North-West University\\
  South Africa \\

   \And
  J.S. ~Allison\\
  Subject Group Statistics\\
  North-West University\\
  South Africa \\

  
     \And
  I.J.H. ~Visagie\\
  Subject Group Statistics\\
  North-West University\\
  South Africa \\
  \texttt{jaco.visagie@nwu.ac.za} \\
}
\begin{document}
\maketitle

\begin{abstract}
We develop two new classes of tests for the Weibull distribution based on Stein’s method. The proposed tests are applied in the full sample case as well as in the framework of random right censoring. We investigate the finite sample performance of the new tests using a comprehensive Monte Carlo study. In both the absence and presence of censoring, it is found that the newly proposed classes of tests outperform competing tests against the majority of the distributions considered. In the cases where censoring is present we consider various censoring distributions. Some remarks on the asymptotic properties of the proposed tests are included. The paper presents another result of independent interest; the test initially proposed in \cite{krit2014goodness} for use with full samples is amended to allow for testing for the Weibull distribution in the presence of censoring. The techniques developed in the paper are illustrated using two practical examples. In the first, we consider the survival times of patients with a certain type of leukemia. The second example is concerned with the initial remission times of leukemia patients, where the observed remission times are subject to random right censoring. We further include some concluding remarks along with avenues for future research.
\end{abstract}

\keywords{Goodness-of-fit testing \and Hypothesis testing \and Random right censoring \and Warp-speed bootstrap \and Weibull distribution.}

\section{Introduction}
\label{Intro}
The Weibull distribution is often used in survival analysis as well as reliability theory, see e.g., \cite{kalbfleisch2011statistical}. This flexible distribution is a popular model which allows for constant, increasing and decreasing hazard rates. The Weibull distribution is also frequently applied in various engineering fields, including electrical and industrial engineering to represent, for example, manufacturing times, see \cite{jiang2011study}. As a result of its wide range of practical uses, a number of goodness-of-fit tests have been developed for the Weibull distribution; see, for example, \cite{mann1973men}, \cite{tiku1981testing}, \cite{liao1999new}, \cite{cabana2005using} as well as \cite{krit2014goodness}.

The papers listed above deal with testing for the Weibull distribution in the full sample case; i.e., where all lifetimes are observed. However, random right censoring often occurs in the fields mentioned above. For example, we may study the duration that antibodies remain detectable in a patient's blood after receiving a specific type of Covid-19 vaccine, i.e. the duration of the protection that the vaccine affords the recipient. When gathering the relevant data, we will likely not be able to measure this duration in all of the patients. For example, some may leave the study by emigrating to a different country while still having detectable antibodies. In this case, the exact time of interest is not observed. This situation is referred to as random right censoring, see e.g., \cite{cox1984analysis}.

In the presence of censoring, testing the hypothesis that the distribution of the lifetimes is Weibull is complicated by the fact that an incomplete sample is observed. \cite{balakrishnan2015empirical} suggests a way to perform the required goodness-of-fit tests by transforming the censored sample to a complete sample. Another approach is to modify the test statistics used in the full sample case to account for the presence of censoring. Although fewer in number, tests for the Weibull distribution in the presence of random censoring are available in the literature. For example, \cite{koziol1976cramer} and \cite{kim2017goodness} propose modified versions of the Cram\'er-von Mises test and the test proposed in \cite{liao1999new}, respectively, for use with censored data.

Throughout the paper we are primarily interested in the situation where censoring is present; the results relating to the full sample case are treated as special cases obtained when all lifetimes are observed. Before proceeding some notation is introduced. Let $X_1, \dots, X_n$ be independent and identically distributed (i.i.d.) lifetime variables with continuous distribution function $F$ and let $C_1, \dots, C_n$ be i.i.d. censoring variables with distribution function $H$, independent of $X_1, \dots, X_n$. We assume non-informative censoring throughout. Let
$$T_j=\mbox{min}(X_j,C_j) \ \ \ \ \ \ \mbox{and} \ \ \ \ \ \ \delta_j=
    \begin{cases}
      1, & \text{if}\ X_j\leq C_j, \\
      0, & \text{if}\ X_j > C_j.
    \end{cases}$$
Note that in the full sample case $T_j=X_j$ and $\delta_j=1$ for $j=1, \dots, n$.

Based on the observed pairs $(T_j, \delta_j), \ j=1, \dots n$ we wish to test the composite hypothesis
\begin{equation}\label{hypothesisW}
    H_0: X \sim Weibull(\lambda,\theta),
\end{equation}
for some unknown $\lambda>0$ and $\theta>0$. Here $X\sim Weibull(\lambda,\theta)$ refers to a Weibull distributed random variable with distribution function
    $$H(x) = 1-e^{-\left(x/\lambda\right)^\theta}, \ \ x>0.$$
This hypothesis is to be tested against general alternatives. We will make use of maximum likelihood estimation to estimate $\lambda$ and $\theta$. The log-likelihood of the Weibull distribution is
    $$\mathcal{L}(\theta,\lambda|X_1,\dots,X_n) = d\log(\theta)-d\theta\log(\lambda)+d(\theta-1)\log(X_j)-\lambda^{-\theta}\sum^n_{j=1} X_j^{\theta},$$
where $d=\sum^n_{j=1} \delta_j$. In the full sample case $d=n$. No closed form formulae for the maximum likelihood estimates $\hat{\lambda}$ and $\hat{\theta}$ exist, meaning that numerical optimisation techniques are required to arrive at parameter estimates.

Since the Weibull distribution has a shape parameter the first step for many goodness-of-fit tests for the Weibull distribution is to transform the data. If $X\sim Weibull(\lambda,\theta)$ then a frequently used transformation is $log(X)$, which results in a random variable that is type I extreme value distributed with parameters $\log(\lambda)$ and $1/\theta$. The resulting transformed random variable is part of a location scale family, which is a desirable result when performing goodness-of-fit testing. We therefore have that if $X \sim Weibull(\lambda,\theta)$, then $X^{(t)}=\theta(\log(X)-\log(\lambda))$ follows a standard type I extreme value distribution with distribution function
    $$G(x) = 1-\textrm{e}^{-\textrm{e}^{x}}, \ \ -\infty<x<\infty.$$
We denote a random variable with this distribution function by $EV(0,1)$. As a result the hypothesis in (\ref{hypothesisW}) holds, if, and only if, $X^{(t)}\sim EV(0,1)$. All of the test statistics considered make use of the transformed observed values
\begin{equation}\label{transform}
   Y_j = \hat{\theta}\left[\log (T_j)-\log(\hat{\lambda})\right], 
\end{equation}
with $\hat{\lambda}$ and $\hat{\theta}$ the maximum likelihood estimates of the Weibull distribution. Let 
$$X_j^{(t)} = \hat{\theta}\left[\log (X_j)-\log(\hat{\lambda})\right].$$
If $X_1,\dots,X_n$ are realised from a Weibull$(\lambda,\theta)$ distribution, then $X_1^{(t)},\dots,X_n^{(t)}$ will approximately be i.i.d. $EV(0,1)$ random variables, at least for large samples. As a result, the tests employed below are based on discrepancy measures between the calculated values of $Y_1,\dots,Y_n$ and the standard type I extreme value distribution. 
The order statistics of $Y_1, \dots ,Y_n$ are denoted by $Y_{(1)} < \cdots < Y_{(n)}$, while $\delta_{(j)}$ represents the indicator variable corresponding to $Y_{(j)}$.

The remainder of the paper is structured as follows. In Section \ref{Tests}, we propose two new classes of tests for the Weibull distribution for both the full sample and censored case. We also modify the test proposed in \cite{krit2014goodness} to accommodate random right censoring. In the presence of censoring the null distribution of all the test statistics considered depend on the unknown censoring distribution, we therefore propose a parametric bootstrap procedure in Section \ref{Bootstrap} in order to compute critical values. Section \ref{MonteCarlo} presents the results of a Monte Carlo study where the empirical powers of the newly proposed classes of tests as well as the newly modified test are compared to those of existing tests. The paper concludes in Section \ref{Conclude} with two practical applications; one concerning the survival times of patients diagnosed with a certain type of leukemia (no censoring is present in these data) and the other relates to observed leukemia remission times (in the presence of censoring). Some avenues for future research are also discussed.

\section{Proposed test statistics}
\label{Tests}

Our newly proposed classes of tests are based on the following theorem, which characterises the standard type I extreme value distribution.
\begin{theorem}\label{stein}
    Let $W$ be a random variable with absolutely continuous density and assume that $E\left[\textrm{e}^W\right]<\infty$. In this case $W\sim EV(0,1)$, if, and only if, 
    $$E\left[\left(it+1-\textrm{e}^W\right)\textrm{e}^{itW}\right]=0, \ \forall \ t \in R, $$
     with $i=\sqrt{-1}$.
\end{theorem}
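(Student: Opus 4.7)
The plan is to identify the Stein operator associated with the $EV(0,1)$ distribution and argue each direction from that standpoint. The density $g(w) = e^{w} e^{-e^{w}}$ has logarithmic derivative $g'(w)/g(w) = 1 - e^{w}$, which suggests the Stein operator $(\mathcal{A}f)(w) = f'(w) + (1-e^{w})\,f(w)$. Plugging in the Fourier test function $f(w) = e^{itw}$ gives $(\mathcal{A}f)(w) = (it + 1 - e^{w})\,e^{itw}$, so the theorem is precisely the assertion that the Stein identity $E[(\mathcal{A}f)(W)] = 0$, evaluated at the characters $e^{itw}$, pins down the law of $W$.

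For the forward direction, assuming $W\sim EV(0,1)$, I would integrate by parts against the density $g$:
\begin{equation*}
E\bigl[f'(W)\bigr] = \int_{-\infty}^{\infty} f'(w)\,g(w)\,dw = \bigl[f(w)g(w)\bigr]_{-\infty}^{\infty} - \int_{-\infty}^{\infty} f(w)\,g'(w)\,dw = -E\bigl[(1-e^{W})\,f(W)\bigr],
\end{equation*}
where the boundary term is discarded because $g(w)\to 0$ as $w\to\pm\infty$ while $|f(w)|=1$. Rearranging gives $E[(\mathcal{A}f)(W)]=0$, which is the stated identity.

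For the converse, let $p$ denote the density of $W$ and rewrite the hypothesis as a functional equation for characteristic functions. Set $\phi(t) = E[e^{itW}]$ and, using $E[e^{W}]<\infty$, $\psi(t) = E[e^{(1+it)W}]$; the identity becomes $(1+it)\phi(t) = \psi(t)$. Since $\phi$ is the Fourier transform of $p$, $\psi$ is the Fourier transform of the $L^{1}$ function $e^{w}p$, and $-it\phi(t)$ is the Fourier transform of the distributional derivative $p'$, the identity recasts as $\widehat{p'}(t) = \widehat{(1-e^{w})p}(t)$ for every $t\in\mathbb{R}$. Fourier uniqueness then delivers the first-order linear ODE $p'(w) = (1-e^{w})\,p(w)$ almost everywhere, whose only probability solution is $p(w) = e^{w}e^{-e^{w}} = g(w)$, hence $W\sim EV(0,1)$. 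The main technical hurdle lies precisely in this Fourier-uniqueness step: one has to interpret $-it\phi(t)$ as the transform of $p'$ in the distributional sense (so that no a priori $L^{1}$ bound on $p'$ is required) and one has to invoke $E[e^{W}]<\infty$ to ensure $(1-e^{w})p$ is integrable, making its transform well defined. Once the ODE is in hand, separation of variables and the normalisation $\int p = 1$ finish the argument.
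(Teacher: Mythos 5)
Your proof is correct, but it takes a more self-contained route than the paper, which disposes of the theorem in one line by citing Proposition 1.4 of Diaconis and Zabell (2004) -- the general density-approach Stein characterisation -- with the test function $f(x)=\textrm{e}^{itx}$. The forward direction is the same in both cases: integration by parts against $g(w)=\textrm{e}^{w-\textrm{e}^w}$, using $g'/g=1-\textrm{e}^w$. Where your argument genuinely adds something is the converse. A generic Stein characterisation of the cited type establishes uniqueness by testing against a rich class of functions (e.g.\ all bounded differentiable $f$ with suitable decay), whereas the theorem as stated only assumes the identity for the complex exponentials $\textrm{e}^{itw}$; your Fourier step -- rewriting the hypothesis as $\widehat{p'}(t)=\widehat{(1-\textrm{e}^{w})p}(t)$, invoking uniqueness of Fourier transforms of tempered distributions, and then solving the resulting first-order linear ODE with the normalisation $\int p=1$ -- is exactly the bridge needed to show that this smaller class of test functions already determines the law. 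The technical care you flag (interpreting $-it\phi(t)$ as the transform of the distributional derivative $p'$, and using $E[\textrm{e}^W]<\infty$ to ensure $(1-\textrm{e}^w)p\in L^1$) is appropriate and correct; the absolute continuity assumption on the density guarantees that the distributional identity upgrades to an a.e.\ pointwise ODE whose only integrable, normalised solution is $g$. In short, the paper buys brevity by deferring to a citation; your version buys a complete and slightly sharper argument, since it verifies that the characteristic-function family alone suffices.
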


\begin{proof}
    The theorem is an immediate consequence of Proposition 1.4 in \cite{diaconis2004use}, where we choose $f(x) = \textrm{e}^{itx}$ as the differentiable function specified in the mentioned paper. Straightforward calculations yield the desired result.
\end{proof}

Let $w(t)$ be a non-negative, symmetric weight function. From Theorem \ref{stein}, we have that 
\begin{align}\label{newtest}
       \eta &= \int_{-\infty}^{\infty} E\left[\left(it+1-\textrm{e}^Y\right)\textrm{e}^{itY}\right]w(t)\mathrm{d}t \notag \\ 
       &= \int_{-\infty}^{\infty} \int_{-\infty}^{\infty} \left(it+1-\textrm{e}^y\right)\textrm{e}^{ity} \mathrm{d}G(y) w(t)\mathrm{d}t 
\end{align}
equals $0$ if $Y\sim EV(0,1)$. Note that the inclusion of the weight function, $w$, above is require to ensure that $\eta$ is finite. Clearly $\eta$ will be unknown because $G$ is unknown. However, $G$ can be estimated by the Kaplan-Meier estimator, $G_n$, of the distribution function given by
\begin{equation}
    1-G_n(t)=
    \begin{cases}
      1, &  t \leq Y_{(1)} \\
      \prod_{j=1}^{k-1}\left(\frac{n-j}{n-j+1}\right)^{\delta_{(j)}}, & Y_{(k-1)}< t \leq Y_{(k)},  \ \ \ \ k=2,\dots, n. \nonumber\\
      \prod_{j=1}^n \left(\frac{n-j}{n-j+1}\right)^{\delta(j)}, & t>Y_{(n)}.
    \end{cases}
  \end{equation}
More details about this estimator can be found in \cite{kaplan1958nonparametric}, \cite{efron1967two} as well as \cite{breslow1974large}. In the full sample case this estimator reduces to the standard empirical distribution function, $G_n(x_{(j)})=j/n$.

Let $\Delta_j$ denote the size of the jump in $G_n(T_{(j)})$;
\begin{equation*}
    \Delta_j=G_n(T_{(j)})-\lim_{t \uparrow T_{(j)}}G_n(t),j=1,\ldots,n.
\end{equation*}
Simple calculable expressions for the $\Delta_j$'s are
\begin{eqnarray*}
\Delta_1 &=& \frac{\delta_{(1)}}{n}, \text{ } \Delta_n = \prod_{j=1}^{n-1}\left(\frac{n-j}{n-j+1}\right)^{\delta_{(j)}} \text{ and}\\
\Delta_j &=& \prod_{k=1}^{j-1} \left(\frac{n-k}{n-k+1}\right)^{\delta_{(k)}} - \prod_{k=1}^{j} \left(\frac{n-k}{n-k+1}\right)^{\delta_{(k)}}\\
&=& \frac{\delta_{(j)}}{n-j+1} \prod_{k=1}^{j-1} \left(\frac{n-k}{n-k+1}\right)^{\delta_{(k)}}, \ j=2,\dots,n-1.
\end{eqnarray*}
In the full sample case $\Delta_j=1/n, j=1,\ldots,n$.

Estimating $G$ by $G_n$ in (\ref{newtest}), we propose the test statistic
\begin{equation}
  S_{n,a} = n\int_{-\infty}^{\infty} \left|\sum_{j=1}^n \Delta_j \Big[it\textrm{e}^{itY_j}+(1-\textrm{e}^{Y_j})\textrm{e}^{itY_j}\Big]\right|^2 w_a(t) \mathrm{d}t,
\end{equation}
where $w_a(t)$ is a weight function containing a user-defined tuning parameter $a>0$. The null hypothesis in (\ref{hypothesisW}) is rejected for large values of $S_{n,a}$.

Straightforward algebra shows that, if $w_a(t)=\textrm{e}^{-at^2}$, then the test statistic simplifies to

\begin{align*}
    S_{n,a}^{(1)} &= n\sqrt{\frac{\pi}{a}} \sum_{j=1}^n\sum_{k=1}^n \Delta_j\Delta_k \textrm{e}^{-(Y_j-Y_k)^2/4a}\left\{-\frac{1}{4a^2}\left((Y_j-Y_k)^2-2a\right)\right. \\
    &+ \left. 2\left(1-\textrm{e}^{Y_j}\right)\left(\frac{1}{2a}\right)\left(Y_j-Y_k\right) + \left(1-\textrm{e}^{Y_j}\right)\left(1-\textrm{e}^{Y_k}\right) \right\},
\end{align*}
and if $w_a(t)=\textrm{e}^{-a|t|}$ the test statistic has the following easily calculable form
\begin{align*}
    S_{n,a}^{(2)} &= n\sum_{j=1}^n\sum_{k=1}^n \Delta_j\Delta_k \left\{ \frac{-4a\left(3(Y_j-Y_k)^2-a^2\right)}{\left((Y_j-Y_k)^2+a^2\right)^3}\right.\\
    &+\left. \frac{8a(Y_j-Y_k)\left(1-\textrm{e}^{Y_j}\right)}{\left((Y_j-Y_k)^2+a^2\right)^2} + \frac{2a\left(1-\textrm{e}^{Y_j}\right)\left(1-\textrm{e}^{Y_k}\right)}{(Y_j-Y_k)^2+a^2} \right\}.
\end{align*}
New goodness-of-fit tests containing a tuning parameter are often accompanied by a recommended value of this parameter; this choice is typically based on the finite sample power performance of the test. Another approach which may be used is to choose the value of the tuning parameter data-dependently; see, for example, \cite{AS2015}. In this paper, we opt to use the values recommended in the literature for tests containing a tuning parameter.

The weight functions specified above correspond to scaled Gaussian and Laplace kernels. These weight functions are popular choices found in the goodness-of-fit literature; see, for example, \cite{Meintanis2003}, \cite{AHM2017}, \cite{ebner2018normal}, \cite{betsch2019new} as well as \cite{HV2020}. The popularity of these weight functions is, at least in part, due to the fact that their inclusion typically results in simple calculable forms for $L^2$-type statistics which do not require numerical integration.

\textbf{Remarks on the asymptotic properties of $S_{n,a}$}

Although we do not derive the asymptotic results related to the proposed classes of test statistics we include some remarks in this regard. $S_{n,a}$ is a characteristic function based weighted $L^2$-type statistic. For a review of characteristic function based testing procedures, see \cite{meintanis2016review}. The asymptotic properties of this class of statistics are studied in detail in \cite{feuerverger1977empirical}, while more recent references include \cite{baringhaus1988consistent}, \cite{klar2005tests} as well as \cite{baringhaus2017limit}. A convenient setting for the derivation of the asymptotic properties of these tests is the separable Hilbert space of square integrable functions. These tests are usually consistent against a large class of fixed alternatives. Typically, the asymptotic null distribution of $S_{n,a}$ corresponds to that of $\int_{-\infty}^{\infty} \left|Z(t)\right|^2 w_a(t) \mathrm{d}t =: S_a$, where $Z(\cdot)$ is a zero-mean Gaussian process. The distribution of $S_a$ is the same as that of $\sum_{j=1}^{\infty} \lambda_jU_j$, where $U_j$ are i.i.d. chi-squared random variables with parameter $1$. However, the covariance matrix of $Z(\cdot)$ as well as the eigenvalues $\lambda_j, \ j\geq1$, depend on the unknown underlying distribution function, $F$. As a result, it is usually preferable to employ a bootstrap methodology in order to estimate critical values. Another advantage of this approach is that we are able to estimate these critical values in the case of finite samples and not just in the asymptotic case.

\section{Bootstrap algorithm}
\label{Bootstrap}
The null distribution of each of the test statistics considered depends on the unknown censoring distribution, even in the case of a simple hypothesis \cite{d1986goodness}. Since we will not assume any known form of the censoring distribution (e.g. the Kozoil-Green model), we propose the following parametric bootstrap algorithm to estimate the critical values of the tests.
\begin{enumerate}
	\item Based on the pairs $(T_j,\delta_j), \ j=1,\dots,n$ estimate $\theta$ and $\lambda$ by $\hat{\theta}$ and $\hat{\lambda}$, respectively, using maximum likelihood estimation.
	\item Transform $T_j$ to $Y_j$ using the transformation in (\ref{transform}) for $j=1,\dots,n$.
	\item Calculate the test statistic, say $W_n := W(Y_1,\ldots,Y_n;\delta_1,\ldots,\delta_n)$.
	\item Obtain a parametric bootstrap sample $X_1^*, \dots, X_n^*$ by sampling from a Weibull distribution with parameters $\hat{\theta}$ and $\hat{\lambda}$.
	\item Obtain a parametric bootstrap sample by sampling from the Kaplan-Meier estimate of the distribution of $C_j^{(t)} = \hat{\theta}\left[\log (C_j)-\log(\hat{\lambda})\right]$. 
	\item Set $$T_j^*=\mbox{min}(X_j^*,C_j^*) \ \mbox{and} \ \delta_j^*=
    \begin{cases}
      1, & \text{if}\ X_j^*\leq C_j^* \\
      0, & \text{if}\ X_j^* > C_j^*.
    \end{cases}$$ 
	\item Calculate $\hat{\theta}^*$ and $\hat{\lambda}^*$ based on $(T_j^*,\delta_j^*), \ j=1,\dots,n$.
	\item Obtain $Y_j^* = \hat{\theta}^*\left[\log(T_j^*)-\log(\hat{\lambda}^*)\right],\ j=1,\dots,n$.
	\item Based on the pairs $\left(Y_j^*,\delta_j^*\right), \ j=1,\dots,n$, calculate the value of the test statistic, say $W_n^* := W(Y_1^*,\ldots,Y_n^*;\delta_1^*,\ldots,\delta_n^*)$. 
	\item Repeat steps 4-9 B times to obtain $W_1^*, \dots, W_B^*$. Obtain the order statistics, $W_{(1)}^* \leq \dots \leq W_{(B)}^*$. The estimated critical value is then $\hat{c}_n(\alpha) = W_{\lfloor B(1-\alpha)\rfloor}^*$ where $\lfloor A\rfloor$ denotes the floor of $A$.
\end{enumerate}
The algorithm provided above is quite general and can easily be amended in order to test for any lifetime distribution in the presence of random censoring. In the absence of censoring there is no need to implement this algorithm; in this case, the critical values can be obtained via Monte Carlo simulation by sampling from any Weibull distribution and effecting the transforming discussed above.

\section{Numerical results}
\label{MonteCarlo}

In this section, we compare the power performances of the newly proposed tests to those of existing tests via a Monte Carlo simulation study. The existing tests used include the classical Kolmogorov-Smirnov ($KS_n$) and Cram\'er-von Mises ($CM_n$) tests. These tests have been modified for use with censored data, see \cite{koziol1976cramer}. The test introduced in \cite{liao1999new} is considered in the case of full samples. A modification making this test suitable for use with censored data is proposed in \cite{kim2017goodness}; we denote the test statistic by $LS_{n}$ in both the full sample and censored cases. The calculable forms of the test statistics mentioned above are
\begin{align*}
    KS_n &= \max\left[\max_{1\leq j \leq n} \left\{G_n(Y_{(j)})-\left(1-\textrm{e}^{-\textrm{e}^{Y_j}}\right)\right\},
    \right.\\
    &\left.\max_{1\leq j \leq n} \left\{\left(1-\textrm{e}^{-\textrm{e}^{Y_j}}\right)-G_n^-(Y_{(j)})\right\}\right],\\
    CM_n &= \frac{n}{3} + n\sum_{j=1}^{d+1} \left\{ G_n\left(X^{(t)}_{j-1}\right) \left(X^{(t)}_j-X^{(t)}_{j-1}\right)\right.\\
    &\left.\times\left[G_n\left(X^{(t)}_{j-1}\right)-\left(X^{(t)}_j+X^{(t)}_{j-1}\right)\right]\right\},\\
    LS_{n} &= \frac{1}{\sqrt{n}}\sum_{j=1}^n \left(\frac{\max\left[j/n-G_n(Y_j),G_n(Y_j)-(j-1)/n\right]}{\sqrt{G_n(Y_j)\left[1-G_n(Y_j)\right]}}\right).
\end{align*}

\cite{krit2014goodness} proposes a test for the Weibull distribution in the full sample case. This test compares the empirical Laplace transform of the random variables resulting from the transformation in (\ref{transform}) to the Laplace transform of an $EV(0,1)$ random variable; $\psi(t) = \Gamma(1-t)$ for $t<1$. Let $\psi_n$ be the empirical Laplace transform of the transformed observations, obtained using the Kaplan-Meier estimate of the distribution function;
\begin{equation}\label{laplace}
    \psi_n(t) = \int_{-\infty}^{\infty}\textrm{e}^{tx} \textrm{d}G_n(t) = \sum_{j=1}^n \Delta_j \textrm{e}^{-tY_j}.
\end{equation}
The resulting test statistic is
\begin{equation}\label{KritIntegral}
 KR^*_{m,a} = n\int_{I} \left[\psi_n(t)-\Gamma(1-t)\right]^2 w_a(t) \mathrm{d}t,
\end{equation}
where $w_a(t)= \textrm{e}^{at-\textrm{e}^{at}}$ is a weight function, $a$ user-specified tuning parameter and $I$ some interval. Based on numerical considerations, \cite{krit2014goodness} suggests that
$I=(-1,0]$ should be used. The quantity in (\ref{KritIntegral}) can be approximated by a Riemann sum;
\begin{equation*}
  KR_{m,a} = n\sum_{k=-m}^{-1}\left[\sum_{j=1}^n \Delta_j\textrm{e}^{-Y_jk/m}-\Gamma(1-k/m)\right]^2 \textrm{e}^{ak/m-\textrm{e}^{ak/m}},
\end{equation*}
where $m$ is the number of points at which the integrand is evaluated. In the numerical results shown below, we use  $a=-5$ and $m=100$ as recommended in \cite{krit2014goodness}. In the full sample case, $G_n$, in (\ref{laplace}), is taken to be the empirical distribution function. Upon setting $\Delta_j=1/n$ we obtain the test statistic in \cite{krit2014goodness}.

For each of the tests considered above, the null hypothesis in (\ref{hypothesisW}) is rejected for large values of the test statistics.

\subsection{Simulation setting}

In the numerical results presented below, we use a nominal significance level of 10\% throughout. Empirical powers are presented for sample sizes $n=50$ and $n=100$. The empirical powers for complete and censored samples are reported; censoring proportions of 10\% and 20\% are included. For each lifetime distribution considered, we report the powers obtained using three different censoring distributions; the exponential and uniform distributions as well as the Koziol-Green model, proposed in \cite{koziol1976cramer}. The alternative lifetime distributions considered are listed in Table \ref{table1}.

\begin{table}[!htbp]
	\centering\footnotesize
	\caption{Density functions of the alternative distributions.\label{table1}}
	\begin{tabular}{|c|c|c|}
	\hline
	Alternative & Density & Notation \\
	\hline
	Weibull & $\theta x^{\theta-1}\exp(-x^{\theta})$ & $W(\theta)$  \\
	Gamma & $\left(\Gamma(\theta)\right)^{-1}x^{\theta-1}\exp(-x)$ & $\Gamma(\theta)$\\
	Lognormal & $\left(\theta x \sqrt{2\pi}\right)^{-1} \exp\left(-{\log^2(x)}\left(2\theta^2\right)^{-1}\right)$ & $LN(\theta)$\\
	Chi square & $\left(2^{\theta/2}\Gamma(\theta/2)\right)^{-1}x^{\theta/2-1}\exp(-x/2)$ & $\chi^2(\theta)$  \\
	Beta & ${x^{\alpha-1}(1-x)^{\theta-1}\Gamma(\alpha+\theta)}\left(\Gamma(\alpha)\Gamma(\theta)\right)^{-1}$ & $\beta(\alpha,\theta)$\\
	Lindley & $\frac{\theta^2}{\theta+1}(1+x)\exp(-\theta x)$ & $Lind(\theta)$  \\
	\hline
	\end{tabular}\normalsize
\end{table} 

The obtained empirical powers are presented in Tables \ref{table2} to \ref{table7}. These tables report the percentages of 50\,000 independent Monte Carlo samples that lead to the rejection of the null hypothesis, rounded to the nearest integer. For ease of comparison, the highest power in each line is printed in bold. Tables \ref{table2} and \ref{table3} contain the results relating to full samples. For each test considered, Tables \ref{table4} to \ref{table7} show three empirical powers against each lifetime distribution, corresponding to the three different censoring distributions used. In each case, the results for the exponential, uniform and Koziol-Green models are shown in the first, second and third lines, respectively.

In order to reduce the computational cost associated with the numerical powers a warp-speed bootstrap procedure, see \cite{GPW:2013}, is employed. This methodology has been employed by a number of authors in the literature to compare Monte Carlo performances; see, for example, \cite{meintanis2018testing}, \cite{allison2019} as well as \cite{MV2020}. The bootstrap algorithm in Section \ref{Bootstrap} is implemented to calculate the critical values used to obtain the results in Tables \ref{table4}-\ref{table7}.

In the discussion below, including the tables, the subscript $n$ is suppressed. For $S_{a}^{(1)}$ and $S_{a}^{(2)}$, we include numerical powers in the cases where $a$ is set to $1, \ 5$ and $10$. All calculations are performed in \textsf{R} \cite{CRAN}. The $\emph{LindleyR}$ package is used to generate samples from censored distributions, see \cite{Lind}. Parameter estimation is performed using the $\emph{parmsurvfit}$ package, see \cite{parmsurvfit}, while 
the tables are produced using the $\emph{Stargazer}$ package, see \cite{Starg}.

\subsection{Simulation results}
First, we consider the results associated with the full sample case, given in Tables \ref{table2} and \ref{table3}. All of the tests considered attain the nominal size for both sample sizes used. The tests associated with the highest powers are $S_{5}^{(1)}$ and $S_{5}^{(2)}$, although $KR_{m,a}$ also performs well, especially for smaller samples. In general, the newly proposed tests outperform the other tests for the majority of the alternatives considered. When analysing complete samples, we recommend using $S_{5}^{(1)}$ or $S_{5}^{(2)}$.

We now turn our attention to the powers achieved in the presence of censoring. The size of the tests are maintained closely for all sample sizes for censoring proportions of 10\% and 20\%, with the single exception of $KR_{m,a}$ in the case of 20\% censoring. As expected, the powers generally increase with the sample size and decrease marginally as the censoring proportion increases.

For all combinations of sample sizes and censoring proportions $KR_{m,a}$ and $S_1^{(1)}$ generally tend to provide the highest powers. However, it should be noted that $KR_{m,a}$ achieves very low power against certain alternatives. For instance, when $n=100$ and the censoring proportion is $10\%$, the powers associated with $KR_{m,a}$ and $S_1^{(1)}$ against the $\beta(0.5,1)$ distribution are $0$ and $98$, respectively. Other examples where $KR_{m,a}$ provides low powers include the $Lind$ distributions.

When compiling the numerical results, we also considered a wider range of values for the tuning parameter, $a$, than those reported in the table. Although some power variation is evident when varying $a$, the powers achieved by the newly proposed classes of tests are not particularly sensitive to the choice of the tuning parameter $a$.

Some remarks are in order when considering the impact of the censoring distribution on the attained powers. First, this impact becomes more pronounced as the censoring proportion increases. Second, the powers associated with the Koziol-Green censoring model are noticeably lower than those associated with the uniform or exponential distributions in the majority of cases. Research as to the reason for this discrepancy is currently underway. Informally, we believe that the discrepancy is due to a higher degree of similarity between the censoring distribution and the hypothesised lifetime distribution than is the case for the other censoring distributions considered.

\section{Practical application and conclusion}
\label{Conclude}
In this section, we use the tests discussed in Section \ref{MonteCarlo} to test the hypothesis in (\ref{hypothesisW}) based on two real-world data sets. The first data set, reported in Table \ref{table10}, contains the survival times, in days, of 43 Leukemia patients. For a discussion of the original data set see \cite{kotze1983encyclopedia} as well as \cite{allison2017apples}. This data set is not subject to censoring, i.e. all lifetimes are observed. The second data set contains the initial remission times of leukemia patients, in days; for more details see, \cite{lee2003statistical}, this data set can be found in Table \ref{table8}. This data set contains censored observations, indicated using an asterisk. The original data were segmented into three treatment groups. However, \cite{lee2003statistical} showed that the data do not display significant differences among the various treatments. As a result we treat the data as i.i.d. realisations from a single, censored, lifetime distribution. All reported p-values are estimated using 1 million bootstrap replications; these results are displayed in Tables \ref{table11} and \ref{table9}, respectively.

\begin{table}[!htbp] \centering 
  \caption{Survival times after leukemia diagnosis, in days.} 
  \label{table10} 
\begin{tabular}{@{\extracolsep{5pt}} c} 
\hline
$7, 47, 58, 74, 177, 232, 273, 285, 317, 429, 440, 445,455, 468, 495, 497, 532,$\\
$ 571, 579, 581, 650, 702, 715, 779,881, 900, 930, 968, 1077, 1109, 1314,$\\
$ 1334, 1367, 1534, 1712, 1784,1877, 1886, 2045, 2056, 2260, 2429, 2509$\\
\hline
\end{tabular} 
\end{table}

\begin{table}[!htbp] \centering 
  \caption{$p$-values associated with the various tests used in the full sample case.} 
  \label{table11} 
\begin{tabular}{ccccccccccc} 
\hline
$Test$ & $KS$ & $CM$ & $LS$ & $KR$ & $S_{1}^{(1)}$ & $S_{5}^{(1)}$ & $S_{10}^{(1)}$ & $S_{1}^{(2)}$ & $S_{5}^{(2)}$ & $S_{10}^{(2)}$ \\
\hline 
$p$-value & $0.93$ & $0.97$ & $0.89$ & $0.58$ & $0.91$ & $0.53$ & $0.48$ & $0.87$ & $0.62$ & $0.49$ \\ 
\hline 
\end{tabular}
\end{table}

From the results of the practical example, in Table \ref{table11}, it is clear that none of the tests reject the null hypothesis that the survival times after a leukemia diagnosis are Weibull distributed at the 5\% or 10\% levels of significance. As a result, we conclude that the Weibull distribution is an appropriate model for these data.

\begin{table}[!htbp] \centering 
  \caption{Initial remission times of leukemia patients, in days.} 
  \label{table8} 
\begin{tabular}{@{\extracolsep{5pt}} c} 
\hline
$4,5,8,8,9,10,10,10,10,10,11,12,12,12^*,13,14,20,20^*,23,23,25,25,25,28,28,28,$\\
$28,29,31,31,31,32,37,40,41,41,48,48,57,62,70,74,75,89,99,100,103,124,139,143,$\\
$159^*,161^*,162,169,190^*,195,196^*,197^*,199^*,205^*,217^*,219^*,220,245^*,258^*,269^*$\\
\hline
\end{tabular} 
\end{table} 

\begin{table}[!htbp] \centering 
  \caption{$p$-values associated with the various tests used in the censored case.} 
  \label{table9} 
\begin{tabular}{ccccccccccc} 
\hline
$Test$ & $KS$ & $CM$ & $LS$ & $KR$ & $S_{1}^{(1)}$ & $S_{5}^{(1)}$ & $S_{10}^{(1)}$ & $S_{1}^{(2)}$ & $S_{5}^{(2)}$ & $S_{10}^{(2)}$ \\
\hline 
$p$-value & $0.03$ & $0.04$ & $0.08$ & $0.28$ & $0.05$ & $0.08$ & $0.11$ & $0.11$ & $0.098$ & $0.13$ \\ 
\hline 
\end{tabular}
\end{table}

The results associated with the initial remission times, in Table \ref{table9}, indicate that $KS$ rejects the hypothesis in (\ref{hypothesisW}) at a 5\% significance level. $CM$ and $S_1^{(1)}$ also provide some evidence against the null hypothesis with $p$-values of 4\% and 5\% respectively. However, none of the remaining 7 tests considered result in a rejection of the null hypothesis at the 5\% level. We conclude that the Weibull distribution is likely to be an appropriate model for the observed times. The data set under consideration was also analysed in \cite{bothma2020kaplan}, where the null hypothesis of exponentiality of the remission time was strongly rejected. The mentioned paper recommended that a more flexible distribution be used when modelling these data. The results above indicate that the additional flexibility of the Weibull (compared to the exponential) distribution indeed ensures that the Weibull distribution is a more appropriate model than the exponential for the initial remission times considered.


A number of interesting numerical phenomena are evident when considering the powers of the various tests. It is clear that the achieved powers, and, therefore, the null distribution of the test statistic, is influenced by the shape of the censoring distribution. The effect of the censoring distribution on the critical values of the tests seem not to have been investigated in the literature to date. Some authors perform goodness-of-fit testing by enforcing a parametric assumption on the censoring distribution, see, for example, \cite{kim2017goodness}. An additional consideration that seems to have been neglected in the literature is the effect on the null distribution of the test statistic, and hence the power of the test, of a specific assumption made in the Kaplan-Meier estimate of the distribution function. Some authors, in order to ensure that $G_n$ satisfies the requirements of a distribution, defines $G_n(x_{(n)})=1$. We are currently investigating these open questions.


%
%

\bibliographystyle{unsrt}  
\bibliography{Article2}  

\begin{table}[!htbp] \centering 
  \caption{Estimated powers for the full sample case where n=50} 
  \label{table2}
\begin{tabular}{@{\extracolsep{1pt}} ccccccccccc} 
\hline
$F$ & $KS$ & $CM$ & $LS$ & $KR$ & $S_{1}^{(1)}$ & $S_{5}^{(1)}$ & $S_{10}^{(1)}$ & $S_{1}^{(2)}$ & $S_{5}^{(2)}$ & $S_{10}^{(2)}$ \\
\hline 
$W(0.5)$ & $10$ & $10$ & $10$ & $10$ & $10$ & $10$ & $10$ & $10$ & $10$ & $10$ \\ 
$W(1.5)$ & $10$ & $10$ & $10$ & $10$ & $10$ & $10$ & $10$ & $10$ & $10$ & $10$ \\ 
$W(2)$ & $10$ & $10$ & $10$ & $10$ & $10$ & $10$ & $10$ & $10$ & $10$ & $10$ \\ 
$\Gamma(2)$ & $13$ & $15$ & $12$ & $\textbf{17}$ & $16$ & $16$ & $15$ & $14$ & $16$ & $15$ \\ 
$\Gamma(3)$ & $18$ & $20$ & $16$ & $\textbf{25}$ & $24$ & $24$ & $24$ & $18$ & $\textbf{25}$ & $24$ \\ 
$LN(0.5)$ & $50$ & $61$ & $51$ & $70$ & $73$ & $76$ & $75$ & $50$ & $\textbf{77}$ & $75$ \\ 
$LN(1)$ & $50$ & $61$ & $50$ & $70$ & $72$ & $\textbf{76}$ & $75$ & $50$ & $\textbf{76}$ & $75$ \\ 
$\chi^2(8)$ & $20$ & $24$ & $19$ & $\textbf{31}$ & $29$ & $\textbf{31}$ & $30$ & $20$ & $\textbf{31}$ & $30$ \\ 
$\chi^2(10)$ & $23$ & $28$ & $22$ & $35$ & $34$ & $\textbf{36}$ & $35$ & $23$ & $\textbf{36}$ & $35$ \\ 
$\beta(1,1)$ & $69$ & $81$ & $76$ & $\textbf{92}$ & $85$ & $88$ & $86$ & $79$ & $89$ & $86$ \\ 
$\beta(0.5,1)$ & $69$ & $81$ & $76$ & $\textbf{92}$ & $85$ & $88$ & $86$ & $79$ & $89$ & $86$ \\ 
$Lind(0.5)$ & $11$ & $11$ & $13$ & $12$ & $11$ & $14$ & $\textbf{15}$ & $10$ & $14$ & $\textbf{15}$ \\ 
$Lind(2)$ & $10$ & $10$ & $11$ & $11$ & $10$ & $11$ & $\textbf{12}$ & $10$ & $11$ & $\textbf{12}$ \\ 
\hline \\[-1.8ex] 
\end{tabular} 
\end{table}

\begin{table}[!htbp] \centering 
  \caption{Estimated powers for the full sample case where n=100} 
  \label{table3} 
\begin{tabular}{@{\extracolsep{1pt}} ccccccccccc} 
\hline
$F$ & $KS$ & $CM$ & $LS$ & $KR$ & $S_{1}^{(1)}$ & $S_{5}^{(1)}$ & $S_{10}^{(1)}$ & $S_{1}^{(2)}$ & $S_{5}^{(2)}$ & $S_{10}^{(2)}$ \\
\hline 
$W(0.5)$ & $10$ & $10$ & $10$ & $10$ & $10$ & $10$ & $10$ & $10$ & $10$ & $10$ \\ 
$W(1.5)$ & $10$ & $10$ & $10$ & $10$ & $10$ & $10$ & $10$ & $10$ & $10$ & $10$ \\ 
$W(2)$ & $10$ & $10$ & $10$ & $10$ & $10$ & $10$ & $10$ & $10$ & $10$ & $10$ \\ 
$\Gamma(2)$ & $18$ & $20$ & $17$ & $26$ & $24$ & $\textbf{27}$ & $26$ & $17$ & $\textbf{27}$ & $26$ \\ 
$\Gamma(3)$ & $26$ & $30$ & $25$ & $41$ & $39$ & $\textbf{44}$ & $\textbf{44}$ & $24$ & $43$ & $\textbf{44}$ \\ 
$LN(0.5)$ & $77$ & $88$ & $79$ & $93$ & $95$ & $\textbf{97}$ & $\textbf{97}$ & $78$ & $\textbf{97}$ & $\textbf{97}$ \\ 
$LN(1)$ & $78$ & $88$ & $80$ & $94$ & $95$ & $\textbf{97}$ & $\textbf{97}$ & $78$ & $\textbf{97}$ & $\textbf{97}$ \\ 
$\chi^2(8)$ & $32$ & $39$ & $32$ & $51$ & $49$ & $\textbf{56}$ & $55$ & $30$ & $55$ & $55$ \\ 
$\chi^2(10)$ & $36$ & $44$ & $36$ & $58$ & $56$ & $\textbf{64}$ & $63$ & $34$ & $63$ & $63$ \\ 
$\beta(1,1)$ & $94$ & $98$ & $96$ & $\textbf{100}$ & $99$ & $99$ & $99$ & $99$ & $99$ & $99$ \\ 
$\beta(0.5,1)$ & $94$ & $98$ & $96$ & $\textbf{100}$ & $99$ & $99$ & $99$ & $99$ & $\textbf{100}$ & $99$ \\
$Lind(0.5)$ & $11$ & $12$ & $14$ & $13$ & $12$ & $\textbf{16}$ & $\textbf{16}$ & $10$ & $15$ & $\textbf{16}$ \\ 
$Lind(2)$ & $10$ & $10$ & $11$ & $11$ & $11$ & $\textbf{12}$ & $\textbf{12}$ & $10$ & $\textbf{12}$ & $\textbf{12}$ \\ 
\hline \\[-1.8ex] 
\end{tabular} 
\end{table}

\begin{table}[!htbp] \centering 
  \caption{Estimated powers for $10\%$ censoring for a sample size of n=50 with three different censoring distributions.} 
  \label{table4} 
\begin{tabular}{@{\extracolsep{1pt}} ccccccccccc} 
\hline
$F$ & $KS$ & $CM$ & $LS$ & $KR$ & $S_{1}^{(1)}$ & $S_{5}^{(1)}$ & $S_{10}^{(1)}$ & $S_{1}^{(2)}$ & $S_{5}^{(2)}$ & $S_{10}^{(2)}$ \\
\hline 
\multirow{3}[2]{*}{$W(0.5)$} & $9$ & $9$ & $8$ & $7$ & $8$ & $8$ & $8$ & $8$ & $8$ & $8$ \\ 
 & $9$ & $9$ & $8$ & $4$ & $9$ & $9$ & $8$ & $8$ & $9$ & $8$ \\ 
 & $10$ & $11$ & $11$ & $7$ & $10$ & $11$ & $10$ & $10$ & $11$ & $10$ \\
\hline 
\multirow{3}[2]{*}{$W(1.5)$} & $9$ & $10$ & $9$ & $9$ & $9$ & $9$ & $8$ & $9$ & $9$ & $8$ \\ 
 & $10$ & $10$ & $9$ & $9$ & $9$ & $9$ & $8$ & $9$ & $9$ & $8$ \\ 
 & $10$ & $10$ & $9$ & $9$ & $8$ & $10$ & $10$ & $9$ & $10$ & $10$ \\
\hline 
\multirow{3}[2]{*}{$W(2)$} & $9$ & $10$ & $10$ & $8$ & $9$ & $9$ & $8$ & $10$ & $9$ & $8$ \\ 
 & $10$ & $10$ & $10$ & $9$ & $9$ & $9$ & $8$ & $10$ & $9$ & $8$ \\ 
 & $10$ & $9$ & $9$ & $9$ & $8$ & $10$ & $10$ & $9$ & $10$ & $10$ \\ 
\hline 
\multirow{3}[2]{*}{$\Gamma(2)$} & $13$ & $14$ & $10$ & $\textbf{21}$ & $15$ & $12$ & $10$ & $13$ & $13$ & $10$ \\ 
 & $12$ & $14$ & $10$ & $\textbf{21}$ & $15$ & $13$ & $11$ & $13$ & $13$ & $11$ \\ 
 & $10$ & $11$ & $8$ & $\textbf{15}$ & $10$ & $11$ & $11$ & $10$ & $11$ & $11$ \\
\hline 
\multirow{3}[2]{*}{$\Gamma(3)$} & $17$ & $19$ & $14$ & $\textbf{29}$ & $22$ & $19$ & $15$ & $17$ & $20$ & $13$ \\ 
 & $16$ & $18$ & $13$ & $\textbf{29}$ & $21$ & $18$ & $14$ & $16$ & $19$ & $12$ \\ 
 & $12$ & $12$ & $9$ & $\textbf{19}$ & $12$ & $12$ & $11$ & $12$ & $12$ & $11$ \\ 
\hline 
\multirow{3}[2]{*}{$LN(0.5)$} & $45$ & $56$ & $42$ & $65$ & $\textbf{67}$ & $64$ & $56$ & $46$ & $65$ & $44$ \\ 
 & $45$ & $55$ & $40$ & $64$ & $\textbf{66}$ & $62$ & $52$ & $45$ & $63$ & $40$ \\ 
 & $23$ & $27$ & $18$ & $\textbf{41}$ & $33$ & $23$ & $16$ & $23$ & $23$ & $15$ \\ 
\hline 
\multirow{3}[2]{*}{$LN(1)$} & $43$ & $54$ & $32$ & $56$ & $\textbf{61}$ & $33$ & $23$ & $43$ & $33$ & $22$ \\ 
 & $42$ & $53$ & $26$ & $47$ & $\textbf{59}$ & $32$ & $30$ & $41$ & $32$ & $30$ \\ 
 & $22$ & $27$ & $18$ & $\textbf{41}$ & $32$ & $22$ & $16$ & $23$ & $23$ & $15$ \\  
\hline 
\multirow{3}[2]{*}{$\chi^2(8)$}  & $19$ & $22$ & $16$ & $\textbf{34}$ & $27$ & $24$ & $20$ & $19$ & $25$ & $17$ \\ 
 & $19$ & $23$ & $16$ & $\textbf{34}$ & $27$ & $24$ & $19$ & $19$ & $25$ & $16$ \\ 
 & $12$ & $13$ & $9$ & $\textbf{21}$ & $14$ & $12$ & $11$ & $12$ & $12$ & $11$ \\ 
\hline 
\multirow{3}[2]{*}{$\chi^2(10)$}  & $22$ & $26$ & $19$ & $\textbf{37}$ & $31$ & $29$ & $25$ & $21$ & $30$ & $20$ \\ 
 & $21$ & $25$ & $19$ & $\textbf{38}$ & $31$ & $28$ & $23$ & $21$ & $29$ & $19$ \\ 
 & $13$ & $15$ & $10$ & $\textbf{23}$ & $15$ & $12$ & $11$ & $13$ & $12$ & $11$ \\ 
\hline 
\multirow{3}[2]{*}{$\beta(1,1)$} & $63$ & $75$ & $71$ & $1$ & $80$ & $80$ & $75$ & $73$ & $\textbf{81}$ & $69$ \\ 
 & $63$ & $75$ & $70$ & $1$ & $79$ & $79$ & $74$ & $73$ & $\textbf{81}$ & $68$ \\ 
 & $\textbf{10}$ & $\textbf{10}$ & $\textbf{10}$ & $5$ & $8$ & $\textbf{10}$ & $\textbf{10}$ & $9$ & $\textbf{10}$ & $\textbf{10}$ \\  
\hline 
\multirow{3}[2]{*}{$\beta(0.5,1)$} & $61$ & $72$ & $67$ & $0$ & $\textbf{75}$ & $72$ & $62$ & $70$ & $74$ & $46$ \\ 
 & $60$ & $71$ & $67$ & $0$ & $\textbf{74}$ & $70$ & $58$ & $69$ & $71$ & $40$ \\ 
 & $16$ & $\textbf{18}$ & $9$ & $17$ & $15$ & $11$ & $11$ & $15$ & $11$ & $11$ \\
\hline 
\multirow{3}[2]{*}{$Lind(0.5)$}  & $10$ & $11$ & $\textbf{12}$ & $8$ & $10$ & $11$ & $10$ & $9$ & $11$ & $9$ \\ 
 & $10$ & $11$ & $\textbf{12}$ & $8$ & $10$ & $\textbf{12}$ & $11$ & $9$ & $11$ & $10$ \\
 & $10$ & $\textbf{11}$ & $\textbf{11}$ & $4$ & $9$ & $10$ & $9$ & $9$ & $9$ & $9$ \\
\hline 
\multirow{3}[2]{*}{$Lind(2)$}  & $\textbf{10}$ & $\textbf{10}$ & $\textbf{10}$ & $7$ & $8$ & $\textbf{10}$ & $9$ & $9$ & $\textbf{10}$ & $9$ \\ 
 & $10$ & $10$ & $10$ & $6$ & $9$ & $\textbf{11}$ & $\textbf{11}$ & $9$ & $\textbf{11}$ & $\textbf{11}$ \\ 
& $\textbf{10}$ & $\textbf{10}$ & $9$ & $8$ & $8$ & $\textbf{10}$ & $\textbf{10}$ & $9$ & $\textbf{10}$ & $\textbf{10}$ \\
\hline \\[-1.8ex] 
\end{tabular} 
\end{table}

\begin{table}[!htbp] \centering 
  \caption{Estimated powers for $20\%$ censoring for a sample size of n=50 with three different censoring distributions.} 
  \label{table5} 
\begin{tabular}{@{\extracolsep{1pt}} ccccccccccc} 
\hline
$F$ & $KS$ & $CM$ & $LS$ & $KR$ & $S_{1}^{(1)}$ & $S_{5}^{(1)}$ & $S_{10}^{(1)}$ & $S_{1}^{(2)}$ & $S_{5}^{(2)}$ & $S_{10}^{(2)}$ \\
\hline 
\multirow{3}[2]{*}{$W(0.5)$} & $7$ & $6$ & $7$ & $5$ & $8$ & $7$ & $7$ & $7$ & $7$ & $7$ \\ 
 & $8$ & $6$ & $7$ & $2$ & $8$ & $8$ & $7$ & $8$ & $7$ & $7$ \\ 
 & $9$ & $8$ & $9$ & $8$ & $8$ & $7$ & $7$ & $8$ & $7$ & $7$ \\
\hline 
\multirow{3}[2]{*}{$W(1.5)$} & $9$ & $9$ & $9$ & $8$ & $8$ & $10$ & $10$ & $8$ & $10$ & $10$ \\ 
 & $9$ & $9$ & $8$ & $8$ & $8$ & $10$ & $10$ & $8$ & $10$ & $10$ \\ 
 & $9$ & $8$ & $8$ & $8$ & $7$ & $7$ & $7$ & $8$ & $7$ & $7$ \\
\hline 
\multirow{3}[2]{*}{$W(2)$} & $9$ & $9$ & $9$ & $7$ & $8$ & $9$ & $8$ & $9$ & $9$ & $8$ \\
 & $9$ & $9$ & $9$ & $7$ & $8$ & $10$ & $10$ & $8$ & $10$ & $10$ \\ 
 & $9$ & $8$ & $8$ & $8$ & $7$ & $7$ & $7$ & $8$ & $7$ & $7$ \\ 
\hline 
\multirow{3}[2]{*}{$\Gamma(2)$} & $12$ & $13$ & $8$ & $\textbf{16}$ & $13$ & $13$ & $12$ & $12$ & $13$ & $12$ \\ 
 & $11$ & $12$ & $7$ & $\textbf{15}$ & $13$ & $14$ & $14$ & $11$ & $14$ & $14$ \\ 
 & $9$ & $10$ & $7$ & $\textbf{14}$ & $10$ & $10$ & $10$ & $10$ & $10$ & $10$ \\ 
\hline 
\multirow{3}[2]{*}{$\Gamma(3)$} & $15$ & $17$ & $11$ & $\textbf{20}$ & $19$ & $15$ & $12$ & $15$ & $15$ & $12$ \\ 
 & $15$ & $17$ & $10$ & $\textbf{20}$ & $18$ & $15$ & $15$ & $15$ & $15$ & $15$ \\ 
 & $11$ & $12$ & $7$ & $\textbf{17}$ & $12$ & $12$ & $11$ & $11$ & $11$ & $11$ \\
\hline 
\multirow{3}[2]{*}{$LN(0.5)$} & $40$ & $50$ & $33$ & $43$ & $\textbf{59}$ & $45$ & $29$ & $41$ & $45$ & $20$ \\ 
 & $39$ & $49$ & $29$ & $41$ & $\textbf{57}$ & $32$ & $22$ & $40$ & $32$ & $21$ \\ 
 & $23$ & $29$ & $16$ & $\textbf{36}$ & $30$ & $18$ & $17$ & $24$ & $18$ & $16$ \\  
\hline 
\multirow{3}[2]{*}{$LN(1)$} & $34$ & $\textbf{45}$ & $20$ & $39$ & $\textbf{45}$ & $27$ & $25$ & $34$ & $27$ & $24$ \\ 
 & $32$ & $\textbf{42}$ & $13$ & $23$ & $36$ & $30$ & $29$ & $28$ & $29$ & $27$ \\ 
 & $22$ & $28$ & $16$ & $\textbf{36}$ & $30$ & $18$ & $17$ & $23$ & $18$ & $16$ \\
\hline 
\multirow{3}[2]{*}{$\chi^2(8)$} & $17$ & $20$ & $14$ & $22$ & $\textbf{24}$ & $17$ & $13$ & $18$ & $18$ & $12$ \\ 
 & $17$ & $20$ & $13$ & $\textbf{23}$ & $\textbf{23}$ & $16$ & $14$ & $17$ & $16$ & $14$ \\ 
 & $12$ & $14$ & $8$ & $\textbf{20}$ & $13$ & $12$ & $12$ & $12$ & $12$ & $11$ \\
\hline 
\multirow{3}[2]{*}{$\chi^2(10)$} & $20$ & $23$ & $16$ & $24$ & $\textbf{27}$ & $22$ & $16$ & $19$ & $22$ & $13$ \\ 
 & $19$ & $23$ & $15$ & $25$ & $\textbf{26}$ & $19$ & $14$ & $19$ & $19$ & $14$ \\ 
 & $12$ & $14$ & $8$ & $\textbf{21}$ & $14$ & $13$ & $13$ & $13$ & $13$ & $12$ \\
\hline 
\multirow{3}[2]{*}{$\beta(1,1)$} & $55$ & $67$ & $65$ & $1$ & $\textbf{71}$ & $68$ & $57$ & $64$ & $69$ & $40$ \\ 
 & $54$ & $66$ & $64$ & $0$ & $\textbf{68}$ & $61$ & $46$ & $62$ & $62$ & $32$ \\ 
 & $10$ & $10$ & $\textbf{12}$ & $2$ & $5$ & $4$ & $4$ & $8$ & $4$ & $4$ \\
\hline 
\multirow{3}[2]{*}{$\beta(0.5,1)$} & $50$ & $59$ & $\textbf{60}$ & $0$ & $54$ & $28$ & $17$ & $55$ & $26$ & $16$ \\ 
 & $48$ & $54$ & $\textbf{59}$ & $0$ & $31$ & $3$ & $2$ & $48$ & $3$ & $2$ \\ 
 & $9$ & $\textbf{10}$ & $6$ & $5$ & $4$ & $5$ & $5$ & $8$ & $5$ & $5$ \\  
\hline 
\multirow{3}[2]{*}{$Lind(0.5)$} & $9$ & $10$ & $\textbf{12}$ & $7$ & $9$ & $9$ & $9$ & $8$ & $9$ & $9$ \\ 
 & $9$ & $9$ & $\textbf{11}$ & $5$ & $9$ & $8$ & $8$ & $8$ & $8$ & $8$ \\ 
& $9$ & $9$ & $\textbf{12}$ & $4$ & $7$ & $5$ & $5$ & $8$ & $5$ & $5$ \\ 
\hline 
\multirow{3}[2]{*}{$Lind(2)$} & $8$ & $8$ & $\textbf{9}$ & $6$ & $7$ & $7$ & $7$ & $8$ & $7$ & $7$ \\ 
 & $8$ & $8$ & $\textbf{9}$ & $4$ & $7$ & $7$ & $7$ & $7$ & $7$ & $7$ \\ 
 & $\textbf{8}$ & $\textbf{8}$ & $\textbf{8}$ & $7$ & $7$ & $7$ & $7$ & $\textbf{8}$ & $7$ & $7$ \\ 
 \hline \\[-1.8ex] 
\end{tabular} 
\end{table}

\begin{table}[!htbp] \centering 
  \caption{Estimated powers for $10\%$ censoring for a sample size of n=100 with three different censoring distributions.} 
  \label{table6} 
\begin{tabular}{@{\extracolsep{1pt}} ccccccccccc} 
\hline
$F$ & $KS$ & $CM$ & $LS$ & $KR$ & $S_{1}^{(1)}$ & $S_{5}^{(1)}$ & $S_{10}^{(1)}$ & $S_{1}^{(2)}$ & $S_{5}^{(2)}$ & $S_{10}^{(2)}$ \\
\hline 
\multirow{3}[2]{*}{$W(0.5)$} & $10$ & $10$ & $9$ & $7$ & $7$ & $7$ & $7$ & $8$ & $7$ & $7$ \\ 
 & $9$ & $9$ & $8$ & $3$ & $8$ & $9$ & $8$ & $8$ & $8$ & $8$ \\ 
 & $11$ & $12$ & $12$ & $7$ & $11$ & $11$ & $10$ & $11$ & $11$ & $10$ \\
\hline 
\multirow{3}[2]{*}{$W(1.5)$} & $10$ & $10$ & $10$ & $9$ & $9$ & $10$ & $9$ & $10$ & $10$ & $8$ \\ 
 & $10$ & $10$ & $10$ & $9$ & $9$ & $10$ & $9$ & $10$ & $10$ & $9$ \\ 
 & $10$ & $10$ & $9$ & $9$ & $8$ & $11$ & $10$ & $9$ & $11$ & $10$ \\ 
\hline 
\multirow{3}[2]{*}{$W(2)$} & $10$ & $10$ & $10$ & $9$ & $9$ & $9$ & $9$ & $10$ & $9$ & $8$ \\ 
 & $10$ & $10$ & $10$ & $10$ & $10$ & $10$ & $9$ & $10$ & $9$ & $8$ \\ 
 & $10$ & $10$ & $9$ & $9$ & $8$ & $11$ & $11$ & $9$ & $11$ & $11$ \\
\hline 
\multirow{3}[2]{*}{$\Gamma(2)$} & $17$ & $19$ & $14$ & $\textbf{32}$ & $22$ & $20$ & $16$ & $16$ & $20$ & $13$ \\ 
 & $16$ & $19$ & $14$ & $\textbf{30}$ & $21$ & $20$ & $16$ & $16$ & $20$ & $14$ \\ 
 & $11$ & $12$ & $9$ & $\textbf{19}$ & $11$ & $11$ & $11$ & $11$ & $12$ & $11$ \\  
\hline 
\multirow{3}[2]{*}{$\Gamma(3)$} & $25$ & $29$ & $22$ & $\textbf{45}$ & $36$ & $36$ & $31$ & $23$ & $36$ & $25$ \\ 
 & $24$ & $28$ & $21$ & $\textbf{44}$ & $35$ & $34$ & $28$ & $23$ & $34$ & $22$ \\ 
 & $14$ & $15$ & $11$ & $\textbf{27}$ & $16$ & $14$ & $12$ & $14$ & $14$ & $12$ \\ 
\hline 
\multirow{3}[2]{*}{$LN(0.5)$} & $74$ & $85$ & $71$ & $86$ & $92$ & $\textbf{94}$ & $91$ & $74$ & $\textbf{94}$ & $86$ \\ 
 & $73$ & $84$ & $70$ & $84$ & $92$ & $\textbf{93}$ & $91$ & $74$ & $\textbf{93}$ & $83$ \\ 
 & $38$ & $46$ & $33$ & $\textbf{63}$ & $56$ & $47$ & $33$ & $37$ & $47$ & $24$ \\ 
\hline 
\multirow{3}[2]{*}{$LN(1)$} & $71$ & $83$ & $61$ & $73$ & $\textbf{90}$ & $73$ & $37$ & $71$ & $71$ & $31$ \\ 
 & $70$ & $82$ & $53$ & $58$ & $\textbf{88}$ & $54$ & $39$ & $68$ & $53$ & $37$ \\ 
 & $38$ & $46$ & $33$ & $\textbf{63}$ & $57$ & $48$ & $33$ & $37$ & $47$ & $24$ \\
\hline 
\multirow{3}[2]{*}{$\chi^2(8)$} & $30$ & $36$ & $27$ & $\textbf{53}$ & $45$ & $47$ & $43$ & $28$ & $47$ & $36$ \\ 
 & $30$ & $36$ & $27$ & $\textbf{52}$ & $46$ & $46$ & $41$ & $29$ & $46$ & $34$ \\ 
 & $16$ & $18$ & $13$ & $\textbf{31}$ & $20$ & $16$ & $12$ & $16$ & $16$ & $12$ \\  
\hline 
\multirow{3}[2]{*}{$\chi^2(10)$}  & $34$ & $41$ & $32$ & $\textbf{57}$ & $51$ & $55$ & $52$ & $33$ & $55$ & $45$ \\ 
 & $34$ & $42$ & $32$ & $\textbf{59}$ & $52$ & $55$ & $51$ & $33$ & $55$ & $44$ \\ 
 & $18$ & $20$ & $14$ & $\textbf{34}$ & $22$ & $18$ & $14$ & $17$ & $18$ & $13$ \\  
\hline 
\multirow{3}[2]{*}{$\beta(1,1)$} & $91$ & $97$ & $94$ & $0$ & $98$ & $98$ & $98$ & $98$ & $\textbf{99}$ & $97$ \\ 
 & $91$ & $97$ & $94$ & $0$ & $98$ & $98$ & $97$ & $97$ & $\textbf{99}$ & $96$ \\ 
 & $\textbf{11}$ & $\textbf{11}$ & $\textbf{11}$ & $3$ & $9$ & $10$ & $9$ & $9$ & $9$ & $9$ \\ 
\hline 
\multirow{3}[2]{*}{$\beta(0.5,1)$} & $90$ & $96$ & $93$ & $0$ & $\textbf{98}$ & $97$ & $95$ & $97$ & $\textbf{98}$ & $91$ \\ 
 & $90$ & $96$ & $93$ & $0$ & $\textbf{98}$ & $97$ & $94$ & $97$ & $\textbf{98}$ & $88$ \\ 
 & $25$ & $\textbf{30}$ & $15$ & $24$ & $28$ & $13$ & $11$ & $21$ & $13$ & $11$ \\ 
\hline 
\multirow{3}[2]{*}{$Lind(0.5)$} & $11$ & $11$ & $\textbf{14}$ & $8$ & $11$ & $13$ & $11$ & $10$ & $12$ & $9$ \\ 
 & $11$ & $11$ & $\textbf{14}$ & $8$ & $11$ & $13$ & $11$ & $10$ & $12$ & $11$ \\ 
 & $12$ & $\textbf{13}$ & $\textbf{13}$ & $3$ & $11$ & $11$ & $9$ & $10$ & $10$ & $9$ \\ 
\hline 
\multirow{3}[2]{*}{$Lind(2)$} & $\textbf{10}$ & $\textbf{10}$ & $\textbf{10}$ & $6$ & $9$ & $\textbf{10}$ & $9$ & $9$ & $\textbf{10}$ & $9$ \\ 
 & $10$ & $10$ & $\textbf{11}$ & $6$ & $9$ & $\textbf{11}$ & $\textbf{11}$ & $9$ & $\textbf{11}$ & $\textbf{11}$ \\ 
 & $\textbf{10}$ & $\textbf{10}$ & $9$ & $8$ & $8$ & $\textbf{10}$ & $\textbf{10}$ & $9$ & $\textbf{10}$ & $\textbf{10}$ \\
\hline \\[-1.8ex] 
\end{tabular} 
\end{table}

\begin{table}[!htbp] \centering 
  \caption{Estimated powers for $20\%$ censoring for a sample size of n=100 with three different censoring distributions.} 
  \label{table7} 
\begin{tabular}{@{\extracolsep{1pt}} ccccccccccc} 
\hline
$F$ & $KS$ & $CM$ & $LS$ & $KR$ & $S_{1}^{(1)}$ & $S_{5}^{(1)}$ & $S_{10}^{(1)}$ & $S_{1}^{(2)}$ & $S_{5}^{(2)}$ & $S_{10}^{(2)}$ \\
\hline  
\multirow{3}[2]{*}{$W(0.5)$} & $8$ & $7$ & $7$ & $5$ & $7$ & $7$ & $7$ & $7$ & $7$ & $7$ \\ 
 & $9$ & $6$ & $8$ & $3$ & $7$ & $7$ & $7$ & $6$ & $7$ & $7$ \\ 
 & $9$ & $10$ & $9$ & $9$ & $8$ & $7$ & $7$ & $8$ & $7$ & $7$ \\  
\hline 
\multirow{3}[2]{*}{$W(1.5)$} & $9$ & $10$ & $10$ & $8$ & $8$ & $10$ & $10$ & $9$ & $10$ & $10$ \\ 
 & $10$ & $10$ & $9$ & $8$ & $8$ & $10$ & $10$ & $8$ & $10$ & $9$ \\ 
 & $9$ & $9$ & $9$ & $8$ & $7$ & $7$ & $7$ & $8$ & $7$ & $7$ \\ 
\hline 
\multirow{3}[2]{*}{$W(2)$} & $10$ & $10$ & $10$ & $8$ & $9$ & $9$ & $8$ & $9$ & $9$ & $8$ \\ 
 & $10$ & $10$ & $9$ & $8$ & $9$ & $10$ & $10$ & $9$ & $10$ & $10$ \\ 
 & $9$ & $10$ & $9$ & $8$ & $7$ & $7$ & $7$ & $8$ & $7$ & $7$ \\ 
\hline 
\multirow{3}[2]{*}{$\Gamma(2)$} & $15$ & $17$ & $12$ & $\textbf{25}$ & $18$ & $15$ & $13$ & $15$ & $15$ & $13$ \\ 
 & $15$ & $17$ & $10$ & $\textbf{21}$ & $16$ & $16$ & $15$ & $13$ & $15$ & $14$ \\ 
 & $12$ & $13$ & $9$ & $\textbf{19}$ & $11$ & $11$ & $11$ & $11$ & $11$ & $10$ \\  
\hline 
\multirow{3}[2]{*}{$\Gamma(3)$} & $23$ & $27$ & $19$ & $\textbf{33}$ & $32$ & $25$ & $17$ & $22$ & $25$ & $15$ \\ 
 & $21$ & $26$ & $15$ & $\textbf{31}$ & $29$ & $20$ & $17$ & $20$ & $20$ & $17$ \\ 
 & $15$ & $17$ & $11$ & $\textbf{26}$ & $16$ & $13$ & $13$ & $14$ & $13$ & $12$ \\
\hline 
\multirow{3}[2]{*}{$LN(0.5)$} & $67$ & $80$ & $61$ & $69$ & $\textbf{88}$ & $84$ & $71$ & $68$ & $84$ & $47$ \\ 
 & $66$ & $79$ & $55$ & $62$ & $\textbf{87}$ & $70$ & $39$ & $66$ & $69$ & $31$ \\ 
 & $41$ & $52$ & $31$ & $\textbf{57}$ & $\textbf{57}$ & $29$ & $24$ & $41$ & $29$ & $22$ \\
\hline 
\multirow{3}[2]{*}{$LN(1)$} & $61$ & $75$ & $43$ & $56$ & $\textbf{76}$ & $38$ & $32$ & $59$ & $37$ & $29$ \\ 
 & $\textbf{57}$ & $72$ & $31$ & $29$ & $\textbf{57}$ & $39$ & $36$ & $47$ & $37$ & $34$ \\ 
 & $40$ & $51$ & $31$ & $\textbf{58}$ & $56$ & $29$ & $23$ & $41$ & $28$ & $21$ \\ 
\hline 
\multirow{3}[2]{*}{$\chi^2(8)$} & $27$ & $32$ & $23$ & $39$ & $\textbf{40}$ & $35$ & $27$ & $26$ & $35$ & $19$ \\ 
 & $28$ & $33$ & $22$ & $38$ & $\textbf{39}$ & $28$ & $19$ & $26$ & $28$ & $17$ \\ 
 & $17$ & $20$ & $12$ & $\textbf{31}$ & $19$ & $14$ & $13$ & $17$ & $14$ & $13$ \\
\hline 
\multirow{3}[2]{*}{$\chi^2(10)$}  & $31$ & $38$ & $27$ & $42$ & $\textbf{47}$ & $44$ & $35$ & $30$ & $44$ & $25$ \\ 
 & $32$ & $38$ & $26$ & $42$ & $\textbf{46}$ & $38$ & $27$ & $30$ & $38$ & $20$ \\ 
 & $19$ & $22$ & $13$ & $\textbf{34}$ & $22$ & $15$ & $14$ & $18$ & $14$ & $13$ \\
\hline 
\multirow{3}[2]{*}{$\beta(1,1)$} & $87$ & $95$ & $91$ & $0$ & $\textbf{97}$ & $96$ & $94$ & $95$ & $\textbf{97}$ & $87$ \\ 
 & $87$ & $94$ & $90$ & $0$ & $\textbf{96}$ & $95$ & $92$ & $95$ & $\textbf{96}$ & $82$ \\ 
 & $14$ & $15$ & $\textbf{17}$ & $1$ & $6$ & $2$ & $2$ & $10$ & $2$ & $2$ \\
\hline 
\multirow{3}[2]{*}{$\beta(0.5,1)$} & $86$ & $91$ & $90$ & $0$ & $\textbf{93}$ & $87$ & $74$ & $92$ & $88$ & $67$ \\ 
 & $86$ & $89$ & $\textbf{91}$ & $0$ & $86$ & $55$ & $37$ & $87$ & $55$ & $35$ \\ 
 & $12$ & $\textbf{14}$ & $7$ & $3$ & $3$ & $4$ & $4$ & $9$ & $4$ & $4$ \\  
\hline 
\multirow{3}[2]{*}{$Lind(0.5)$} & $10$ & $11$ & $\textbf{13}$ & $7$ & $9$ & $9$ & $9$ & $8$ & $9$ & $9$ \\ 
 & $10$ & $11$ & $\textbf{12}$ & $5$ & $10$ & $8$ & $7$ & $8$ & $8$ & $7$ \\ 
 & $11$ & $12$ & $\textbf{14}$ & $3$ & $8$ & $5$ & $5$ & $8$ & $5$ & $5$ \\
\hline 
\multirow{3}[2]{*}{$Lind(2)$} & $9$ & $\textbf{10}$ & $\textbf{10}$ & $6$ & $7$ & $7$ & $7$ & $8$ & $7$ & $7$ \\ 
 & $9$ & $\textbf{10}$ & $9$ & $3$ & $7$ & $6$ & $6$ & $8$ & $6$ & $6$ \\ 
 & $\textbf{9}$ & $\textbf{9}$ & $\textbf{9}$ & $7$ & $7$ & $7$ & $7$ & $7$ & $7$ & $7$ \\
\hline \\[-1.8ex] 
\end{tabular} 
\end{table} 

\end{document}